\newcommand{\ket}[1]{\left|#1\right\rangle}
\newcommand{\bracket}[2]{\left\langle #1|#2\right\rangle}
\newcommand\defn[1]{\textsl{#1}}
\newcommand\ketbra[1]{|#1\rangle\langle#1|}
\newcommand\cH{{\mathcal H}}
\newcommand\cM{{\mathcal M}}
\newcommand\cN{{\mathcal N}}
\newcommand\cG{{\mathcal G}}
\newcommand\cB{{\mathcal B}}
\newtheoremstyle{dotless}{}{}{\itshape}{}{\bfseries}{}{ }{}
\theoremstyle{dotless}
\newtheorem{theorem}{Theorem}
\newtheorem{definition}{Definition}
\begin{document}
\title{Alignment of reference frames and an operational interpretation for the
$G$-asymmetry}
\author{Michael Skotiniotis}
\affiliation{Institute
for Quantum Information Science, University of Calgary,2500 University
Drive NW,Calgary AB, T2l 1N1, Canada}
\author{Gilad Gour} 
\affiliation{Institute
for Quantum Information Science, University of Calgary,2500 University
Drive NW,Calgary AB, T2l 1N1, Canada}
\affiliation{Department of Mathematics and Statistics, University of Calgary,
2500 University Drive NW
Calgary, Alberta, T2N 1N4, Canada}
\date{\today}
\begin{abstract}
We determine the quantum states and measurements that optimize the accessible information in a reference frame alignment protocol associated with the groups $U(1)$, corresponding to a phase reference, and $\mathbb{Z}_M$, the cyclic group of $M$ elements.  Our result provides an operational interpretation for the $G$-asymmetry which is information-theoretic and which was thus far lacking. In particular, we show that in the limit of many copies of the bounded-size quantum reference frame, the accessible information approaches the Holevo bound.  This implies that the rate of alignment of reference frames, measured by the (linearized) accessible information per system, is equal to the regularized, linearized $G$-asymmetry. The latter quantity is equal to the number variance in the case where $G=U(1)$. 
Quite surprisingly, for the case where $G=\mathbb{Z}_{M}$ and $M\geq 4$, it is equal to a quantity that is not additive in general, but instead can be superadditive under tensor product of two distinct bounded-size reference frames. 
This remarkable phenomenon is purely quantum and has no classical analog.
\end{abstract}
\maketitle

\section{\label{sec:1}Introduction}
States and transformations of systems are defined relative to an external frame
of reference.  If the alignment of two distant reference frames deteriorates
with time, or if the state one possesses was prepared according to some
reference frame one does not have access to, super-selection rules are imposed
on the kind of states that can be prepared and the type of operations that can
be implemented~\cite{BW}.  For example, lacking a Cartesian frame of reference
associated with the rotation group, $\mathrm{SO}(3)$, implies that one can only
prepare rotationally invariant states and perform rotationally invariant
operations.  The lack of a reference frame can be alleviated (at least
partially) if a party is provided with a bounded-size quantum reference
frame~\cite{BRST}; i.e.~a non-invariant ($\mathrm{SO}(3)$-asymmetric) state,
such as the state of a particle with integer spin pointing in some known
direction~\cite{Peres2}. In the general case where the lack of a requisite
reference frame is associated with a finite or compact Lie group $G$, such
$G$-asymmetric states are considered \defn{resources}. They are measured by functions,
called ``frameness monotones''~\cite{GourSpekkens} or simply asymmetry
measures~\cite{Marvian,Gour}, that do not increase under the set of allowable
(i.e.~$G$-invariant) quantum operations.

There are several measures in the literature that quantify asymmetry for
particular groups~\cite{GourSpekkens}, or in particular dimensions~\cite{Borzu}.
However, there exists one measure, the $G$-asymmetry~\cite{Vaccaro}, that is
defined for all groups and all dimensions.  The \defn{$G$-asymmetry} of a state
$\rho\in\cB(\cH_d)$, a bounded, positive semi-definite operator acting on a
$d$-dimensional Hilbert space $\cH_d$, is defined as
\begin{equation}
A_G(\rho):=S\left(\cG\left[\rho\right]\right)-S(\rho),
\label{1}
\end{equation}
where $\cG(\cdot)\equiv\int\, \mathrm{d}g T(g)(\cdot)T(g)^{\dagger}$ is the \defn{$G$-twirling} operation, with the
integral performed over the Haar measure $\mathrm{d}g$, $T$ a representation of $G$ on the Hilbert space $\cH_d$, and $S$ the von Neumann entropy.  For the case of finite groups, the integral in the $G$-twirling operation is replaced by a sum, and the invariant Haar measure is given by $1/\lvert G\rvert$, where $\lvert G\rvert$ is the order of the group. The $G$-asymmetry was shown to be equal to the relative entropy of frameness~\cite{Gour}, the latter being
analogous to the relative entropy of entanglement~\cite{Plenio}. 

The relative entropy plays a crucial role in many quantum resource theories. Its importance comes from the fact that its asymptotic version provides the unique rate for reversible transformations~\cite{Horodecki}.  This property was demonstrated with the discovery that the regularized relative entropy of entanglement is the unique function that quantifies the rate of interconversion between states in a reversible theory of entanglement, 
where all types of non-entangling operations are allowed~\cite{Brandao08}. More 
recently, the importance of the relative entropy was demonstrated in the 
resource theory of thermodynamics~\cite{Horodecki11, Brandao11}. However, in the resource theory of reference frames the regularized relative entropy is 
zero on all states~\cite{Gour}. We will therefore need to rescale it in order to 
find its operational meaning.

In this work we consider the case where $G=\mathbb{Z}_M$, the cyclic group of $M$ elements, and the case where $G=U(1)$ associated with the important case of photon number super-selection. For these cases we provide an operational interpretation for the $G$-asymmetry that is information theoretic, different than the interpretation in~\cite{Vaccaro} of the $G$-asymmetry as extractable mechanical work. In particular, we find the strategy for aligning a pair of reference frames associated with $G$ that optimizes the amount of accessible information between the true relation, $g\in G$, of the two reference frames and the estimated relation, $g'\in G$, obtained by measurement. Our approach is different from previous work on reference frame alignment protocols where the success of the
protocol was quantified by maximizing an average cost, such as the fidelity or the maximum likelihood of a correct guess~\cite{Bagan1,Peres1,Bagan3,Peres2,Bagan2,Bagan4,Chiribella2,vonKorff}(for
a review of these protocols see ~\cite{review}). Choosing to deal only with cyclic abelian groups and $U(1)$ allows for easier exposition of our main result that, in the asymptotic limit, the amount of accessible information is equal to the $G$-asymmetry. As the accessible information is upper bounded by the Holevo quantity~\cite{Holevo},  our result also implies that for a reference frame alignment protocol associated with $G=\mathbb{Z}_M$ and $G=U(1)$, the Holevo bound is asymptotically achievable. 

This paper is organized as follows.  In Sec.~\ref{sec:2} we review the resource
theory of reference frames and the connection between the $G$-asymmetry, the
relative entropy of frameness, and the Holevo bound.  In Sec.~\ref{sec:3} we present our main results, and
determine the asymptotic rate of transmission of information in a phase alignment protocol (Sec.~\ref{sec:3a})~\cite{review}, and in a reference frame alignment protocol associated with $G=\mathbb{Z}_M$ (Sec.~\ref{sec:3b}) .  In Sec.~\ref{additivity} we discuss the additivity and superadditivity of the linearized, regularized $G$ asymmetry for both the $U(1)$ and $\mathbb{Z}_M$ cases.  We end the paper in Sec.~\ref{sec:4} with the summary and conclusions. 

\section{\label{sec:2} Relative entropy of frameness, the Holevo Bound, and the Alignment Rate}
Suppose two parties, Alice and Bob, wish to align their reference frames
associated with some group $G$.  Let $T:G\to\mathrm{GL}(\cH_d)$ be the unitary
representation of $G$ acting on a $d$-dimensional Hilbert space $\cH_d$.  Alice
prepares a system in a state $\ket{\psi}\in\cH_d$ and sends this system to Bob.
Assuming that Bob has complete ignorance as to which element $g\in G$ relates
his reference frame to Alice's, Bob's description of the state of the system 
sent to him by Alice is given by the ensemble $\{\mathrm{d}g,\,T(g)\ket{\psi}\}$, where $\mathrm{d}g$ is the Haar measure of the group $G$, or equivalently by the $G$-twirling of $\ket{\psi}$.
  
Alternatively, we can think of the above scenario in the following way.  We can assume that Alice
and Bob share a common frame of reference, but instead of sending the state $|\psi\rangle$, Alice sends Bob a state $T(g)|\psi\rangle$ picked from the ensemble
$\{\mathrm{d}g,\,T(g)\ket{\psi}\}$.  Bob's task is to determine $g$. In this view it is natural to quantify Bob's success in determining $g$ by the accessible information.

Let $X$ be the random variable consisting of the elements of $G$ with uniform probability distribution given by the Haar measure. Alice sends 
classical information to Bob by preparing a system in the state $\rho(X)=T(X)\rho T(X)^\dag$, where later we will assume that $\rho=\ketbra{\psi}$ is a pure state. 
Bob performs a positive operator valued measure (POVM) $\{E_y\}$, and obtains outcome $y$ with
probability $p_y$.  Let $Y$ denote the random variable associated with Bob's
measurement outcome. The \defn{accessible information}, $I^{(1)}(X:Y)$, for a single system is defined as the maximum amount of mutual information between random variables $X$ and $Y$, where the maximization is performed over all of Bob's possible POVMs.  An upper bound for the accessible information is the Holevo quantity~\cite{Holevo} which in the case of continuous groups is given by 
\begin{equation}
\chi^{(1)}=S\left(\cG[\rho]\right)-\int\,\mathrm{d}g\,
S\left(T(g)\rho T(g)^{\dagger}\right),
\label{2}
\end{equation}
where $S(\cdot)$ is the von Neumann entropy. If $G$ is a finite group, then the integral in Eq.~\eqref{2} is replaced by a sum, and $p_g=1/|G|$.  Since for any unitary transformation
$U, \,S(U\rho U^{\dagger})=S(\rho)$, the $G$-asymmetry given by Eq.~\eqref{1} is
equal to the Holevo quantity and thus is an upper bound for the accessible
information~\cite{Vaccaro06,Vaccaro12a,Gour}. For $N$ copies of a system prepared in the state $\rho$, the accessible information, $I^{(N)}(X:Y)$, is upper bounded by $\chi^{(N)}$, which is equal to the $G$-asymmetry, $A_G(\rho^{\otimes N})$, of the state $\rho^{\otimes N}$. 

The $G$-asymmetry is also related to the relative entropy of frameness defined
as 
\begin{equation}
\mathrm{min}_{\sigma\in\mathfrak{I}}\, S\left(\rho||\sigma\right),
\label{3}
\end{equation}
where $S\left(\rho||\sigma\right)=-S(\rho)-\mathrm{tr}\left(\rho\log\sigma\right)$,
and $\mathfrak{I}\equiv\{\sigma|\,\cG[\sigma]=\sigma\in\cB(\cH_d)\}$ is the set
of invariant states~\cite{Gour}.  Due to the invariance of the Haar measure, the $G$-twirling operation $\cG$, is idempotent, $\cG\circ\cG=\cG$, and unital,
$\cG[I]=I$.  For such a quantum operation it was shown in~\cite{Gour} that the
minimum relative entropy distance between an arbitrary state $\rho\in\cB(\cH)$
and a state $\sigma\in\mathfrak{I}$ satisfies
\begin{equation}
\mathrm{min}_{\sigma\in\mathfrak{I}}\,
S\left(\rho||\sigma\right)=S\left(\cG[\rho]\right)-S(\rho).
\label{4}
\end{equation}
Thus, the $G$-asymmetry is equal to the relative entropy of frameness.  

In entanglement theory, the regularized relative entropy of entanglement has an
operational interpretation. It quantifies the rate of interconversion between states 
in a reversible theory of entanglement~\cite{Horodecki}. One can therefore expect that the relative entropy of frameness would have an operational interpretation similar to that of the relative entropy of
entanglement.  However, it was shown in~\cite{Gour} that the regularized
relative entropy of frameness is zero on all states;
$$
\lim_{N\to\infty}\frac{A_{G}(\rho^{\otimes N})}{N}=0\;.
$$
This is because the
relative entropy of frameness is not an extensive quantity, i.e.~it does not scale linearly with the number of systems $N$, unlike the relative entropy of entanglement.  

For this reason, and following~\cite{Gour}, we introduce here the \emph{linearization function} $\mathcal{L}$. The linearization function $\mathcal{L}:\mathbb{R}\rightarrow\mathbb{R}$
is a monotonically increasing function that linearizes $A_G(\rho^{\otimes N})$; that is, $\mathcal{L}$ is chosen such that $\mathcal{L}(A_G(\rho^{\otimes N}))\propto N$ in the limit of large $N$. With such a choice for $\mathcal{L}$, we can rescale both the accessible information and the $G$-asymmetry so that they become extensive quantities.
We therefore define the \emph{regularized}, linearized $G$-asymmetry as follows:
\begin{equation}\label{regu}
A_{G}^{(reg)}(\psi)\equiv\lim_{N\rightarrow\infty}\frac{\mathcal{L}\left(A_G\left(\ketbra{\psi}^{\otimes N}\right)\right)}{N}\;.
\end{equation}
We will show in the next section that the above quantity has an operational interpretation for $G=\mathbb{Z}_M$ or $G=U(1)$.
It measures the rate, $R_{G}(\psi)$, at which many copies of a bounded-size quantum reference frame, $\ket{\psi}$, elevate
the restrictions induced by the lack of a shared reference frame. We now give its precise definition.

\begin{definition}\label{rate}
The alignment rate of a state $\ket{\psi}$ is defined by 
$$
R_G(\psi)\equiv\lim_{N\rightarrow\infty}\frac{\mathcal{L}(I^{(N)}(X:Y))}{N}
$$
\end{definition}
That is, $R_G(\psi)$, measures the rate at which Bob learns about the orientation of Alice's reference frame from many copies of
a bounded-size quantum reference frame, $\ket{\psi}$. By definition, $R_{G}(\psi)$ is a deterministic frameness monotone~\cite{GourSpekkens} (i.e.~does not increase under deterministic $G$-invariant operations). Furthermore, since $\mathcal{L}$ is a monotonic function, the Holevo bound implies that $R_{G}(\psi)$ is bounded above by $A_{G}^{(reg)}(\psi)$.
In the next section we show that in fact $R_G(\psi)=A_{G}^{(reg)}(\psi)$ for the case where $G=U(1)$ (Sec.~\ref{sec:3a}), associated with the case where two parties lack a common phase reference, and for the case where the reference frame is associated with the finite cyclic group of $M$ elements, $\mathbb{Z}_M$ (Sec.~\ref{sec:3b}).         
     
\section{\label{sec:3} Optimal rate for alignment of reference frames}   
In this section we determine the optimal alignment rate for the cases where Alice and Bob lack a shared frame of reference associated with $G=U(1)$ and  $G=\mathbb{Z}_M$.   Specifically, we show that for a phase reference ($G=U(1)$) $R_{U(1)}(\psi)$ is proportional to the \emph{number variance} of the bounded-size token of Alice's phase reference, whereas $R_{\mathbb{Z}_M}(\psi)$ is not proportional to variance of the bounded-size token of Alice's frame of reference even in the limit $M\to\infty$.

\subsection{\label{sec:3a} Phase reference}
We begin by considering two parties, Alice and Bob, who share an ideal quantum channel but lack a shared phase reference as discussed in Sec.~\ref{sec:2}.  The relevant group of transformations associated with a phase reference  is $U(1)$, the group of real numbers modulo $2\pi$.  Physically, Alice and Bob lack a shared  phase reference if their local oscillators have an unknown relative phase, i.e.~they are not phase locked.  The unitary representation, $T$, describing a phase shift $\theta\in U(1)$ is given by $T(\theta)=e^{\imath\theta\hat{N}}$, where $\hat{N}$ is the number operator.  If Alice and Bob have complete ignorance as to the relative phase between their respective local oscillators, then any state $\rho\in\cB(\cH)$  prepared by Alice is described as $\cG\left[\rho\right]$ by Bob.   

As the number operator is unbounded (from above), $T$ acts on an infinite dimensional Hilbert space, $\cH$.   Using Schur's lemmas~\cite{Sternberg}, the representation $T$ can be decomposed into the one-dimensional irreducible representations (irreps), $T^{(n)}(\theta)=e^{\imath \theta n}$ of $U(1)$,
\begin{equation}
T(\theta)=\bigoplus_{n=0}^{\infty} \alpha_n\, T^{(n)}(\theta),
\label{5}
\end{equation}
where the irrep label $n$ represents the total photon number, and $\alpha_n$ is the integer multiplicity of irrep $T^{(n)}$.    

Without a shared phase reference Alice faces a restriction on the type of states she can prepare with respect to Bob's phase reference. That is, the lack of a shared phase reference with Bob, imposes a photon number super-selection rule in which superpositions of photon number eigenstates cannot be prepared by Alice~\cite{review}.  Consequently, it is convenient to write the total Hilbert space, $\cH$, as
\begin{equation}   
\cH=\bigoplus_{n=0}^{\infty}\,\cH^{(n)}=\bigoplus_{n=0}^{\infty} \cM^{(n)}\otimes \cN^{(n)},
\label{6}
\end{equation}
where $\cM^{(n)}$ carries the irrep $T^{(n)}$, and $\cN^{(n)}$ carries the trivial representation of $U(1)$.   

In addition to preparation of states, a photon number super-selection rule also imposes restrictions on the types of operations  Alice can perform (relative to Bob's reference frame). More precisely, Alice is restricted to $U(1)$-invariant operations, which were shown in~\cite{GourSpekkens} to be of two types: shifts in the total photon number (by adding or removing photons), and changes in the relative amplitudes of different photon number states. In particular, the set of $U(1)$-invariant \emph{reversible} transformations consists of all unitary matrices that commute with the number operator  and shifts (in photon number). 

Thus, under a photon number super-selection rule all states $\ket{n,\alpha}\in\cM^{(n)}\otimes\cN^{(n)}$, for a given total photon number $n$, are equivalent up to U(1)-invariant unitary transformations. Hence, we can pick any pure state, say $\ket{n,\alpha=1}\in\cM^{(n)}\otimes\cN^{(n)}$, as our standard one. Hence,  any qudit $|\psi\rangle\in\cH$, can be brought by $U(1)$-invariant unitary transformations (and shifts), to a standard form 
\begin{equation}
\ket{\psi}=\sum_{n=0}^{d-1}\sqrt{p_n}\ket{n},
\label{7} 
\end{equation}
where $\sum_n p_n=1$.  

In the phase alignment protocol we consider here, Alice sends $N$ copies of a qudit prepared in the state $|\psi\rangle$ of Eq.~\eqref{7} to Bob.  The state $|\psi\rangle^{\otimes N}$ is a superposition of the \defn{tensor product number states} $\{\ket{x_1\ldots x_N},\; x_1,\ldots, x_N\in(0,\ldots,d-1)\}$.  Each such state can be written in terms of the total photon number, $n$, and its multiplicity $\alpha$. That is, $\ket{n,\alpha}\equiv\ket{x_1\ldots x_N}$ and $\alpha=1,...,l_n$, where $l_n$ denotes the number of orthonormal states with the same photon number $n$. 
In the basis $|n,\alpha\rangle$, the state $\ket{\psi}^{\otimes N}$ can be written as
\begin{equation}
\ket{\psi}^{\otimes N}=\sum_{n=0}^{N(d-1)}\sum_{\alpha=1}^{l_{n}}
\sqrt{c_{n,\alpha}}\,\,\ket{n,\alpha},
\label{8}
\end{equation}
where $c_{n,\alpha}$ are of the form $c_{n,\alpha}=\Pi_{j=0}^{d-1}p_j^{r_j}$, and $r_j$ are non-negative integers (corresponding to the number of times label $j$ appears in the string $x_1\ldots x_N$) satisfying $\sum_j r_j=N$ and $\sum_j jr_j=n$. Moreover, by $U(1)$-invariant unitary operations we can transform
\begin{equation}
\frac{\sum_{\alpha}\sqrt{c_{n,\alpha}}\ket{n,\alpha}}{\sqrt{\sum_{\alpha}c_{n,\alpha}}},
\label{9}
\end{equation}
to the standard state $\ket{n}\in\cM^{(n)}\otimes \cN^{(n)}$. This transformation brings $\ket{\psi}^{\otimes N}$
to the form
\begin{equation}
\ket{\psi}^{\otimes N}=\sum_{n=0}^{N(d-1)} \sqrt{c_{n}}\,\ket{n},
\label{10}
\end{equation}
where (see~\cite{Schuch,GourSpekkens})
\begin{equation}
c_n=\sum\binom{N}{r_0\ldots r_{d-1}}p_0^{r_0}\ldots p_{d-1}^{r_{d-1}}
\label{11}
\end{equation}
are the multinomial coefficients that arise from  the expansion $\left(\sum_n\sqrt{p_n}\ket{n}\right)^{\otimes N}$,
where terms giving rise to the same total photon number $n$ are grouped together.  We note that the sum in Eq.~\eqref{11} is taken over all non-negative integers $r_j$ for which $\sum_{j=0}^{d-1}r_j=N$ and $\sum_{j=0}^{d-1}jr_j=n$.  

Bob's description of the $N$ qudits sent to him by Alice is given by 
\begin{equation}
\ket{\psi(\theta)}^{\otimes N}=(T(\theta)\ket{\psi})^{\otimes N}=\sum_{n=0}^{N(d-1)}\sqrt{c_n}\,e^{\imath n\theta}\ket{n}.
\label{12}
\end{equation}
Due to the lack of a shared phase reference, Bob has complete ignorance about the value of $\theta$.   As the super-selection rule does not forbid us from performing any unitary on the multiplicity spaces, the state $\ket{\psi(\theta)}^{\otimes N}$ can be embedded in a $(N(d-1)+1)$-dimensional Hilbert space. This is the key reason why the $U(1)$-asymmetry is not an extensive quantity.  Bob's task is, therefore, to extract information about $\theta$ from a state in an $N(d-1)+1$-dimensional Hilbert space instead of a $d^N$-dimensional Hilbert space.

Suppose Bob's POVM is given by $\{E_{\theta'}\,\mathrm{d}\theta'\}$, where $E_{\theta'}\geq 0$ with $\int_0^{2\pi}E_{\theta'}\mathrm{d}\theta'=I=\sum_{n=0}^{N(d-1)}\ketbra{n}$. 
We would like to quantify how much Bob learns about $\theta$ from such a measurement.  Denote  by $\Theta$ the random variable associated with the relative phase, $\theta$, between Alice's and Bob's phase references.
That is, $\Theta=\theta$ with uniform probability distribution $p_\theta=1/2\pi$. Denote also by $\Theta'$ the random variable associated with Bob's measurement outcome $\theta'$.  Then, as discussed in the previous section,  the accessible information, $I^{(N)}(\Theta:\Theta')$, satisfies
\begin{equation}
I^{(N)}(\Theta:\Theta')\leq A_{U(1)}\left((\ketbra{\psi})^{\otimes N}\right).
\label{13}
\end{equation}
Using Eqs.~(\ref{1},~\ref{12}), the right-hand side of Eq.~\eqref{13} is
\begin{align}\nonumber
A_{U(1)}\left((\ketbra{\psi})^{\otimes N}\right)&=S\left(\cG\left[\ketbra{\psi}^{\otimes N}\right]\right)-S\left(\ketbra{\psi}^{\otimes N}\right)\\
&=S\left(\sum_{n=0}^{N(d-1)}\,c_n\ketbra{n}\right)= H\left(\{c_n\}\right),
\label{14}
\end{align}
where $H(\{c_n\})$ is the Shannon entropy of the probability distribution $\{c_n\}$.
Consequently, the accessible information per copy of the state $\ket{\psi}\in\cH_d$ obeys
\begin{equation}
\frac{I^{(N)}(\Theta:\Theta')}{N}\leq \frac{H\left(\{c_n\}\right)}{N}.
\label{15}
\end{equation}

We are interested in determining the accessible information per copy in the limit where the number of copies $N\rightarrow\infty$.  However, in this limit the right-hand side of Eq.~\eqref{15} tends to zero~\cite{Gour}.  Indeed, so long as the photon number spectrum is gapless, i.e.~$p_n\neq0$ for $0<n<d-1$~\cite{Schuch}, the probability distribution, $\{c_n\}$, can be approximated, in the limit $N\rightarrow\infty$,  with the normal distribution~\cite{Durrett}
\begin{equation}
c_n=\frac{1}{\sqrt{2\pi\sigma_N^2}}\exp\left(-\frac{(n-\mu_N)^2}{2\sigma_N^2}\right)+O\left(\frac{1}{N}\right),
\label{16}
\end{equation} 
where 
\begin{eqnarray}\nonumber
\sigma_N^2&=&N V(\psi)\equiv N\left(\sum_{n=0}^{d-1}n^2p_n-\left(\sum_{n=0}^{d-1}np_n\right)^2\right)\\ 
\mu_N&=&N\sum_{n=0}^{d-1}np_n,
\label{17}
\end{eqnarray}
with $V(\psi)$ the number variance of the state $\ket{\psi}$.  Using Eq.~\eqref{16}  the right-hand side of Eq.~\eqref{14} reads (see~\cite{Gour} for details)
\begin{equation}
\frac{H\left(\{c_n\}\right)}{N}=\frac{\frac{1}{2}\log(4\pi N V(\psi))+O\left(\frac{1}{\sqrt{N}}\right)}{N}.
\label{18}
\end{equation}
Due to the logarithmic dependence of $H(\{c_n\})$ on $N$ the $U(1)$-asymmetry is not an extensive quantity, and as a result the limit $N\rightarrow\infty$ of Eq.~\eqref{18} tends to zero.  

Following our discussion in the previous section, we introduce the linearization function, $\mathcal{L}(x)=2^{2x}$, so that the regularized $U(1)$-asymmetry is given by 
\begin{equation}
A_{U(1)}^{(reg)}(\psi)=\lim_{N\rightarrow\infty}\frac{\mathcal{L}\left(A_{U(1)}\left(\ketbra{\psi}^{\otimes N}\right)\right)}{N}=4\pi V(\psi).
\label{20}
\end{equation}
Furthermore, as $\mathcal{L}(x)=2^{2x}$ is a monotonically increasing function, it follows from Eq.~(\ref{15}) that Eq.~\eqref{20} is an upper bound for the rate of accessible information 
\begin{equation}
\frac{\mathcal{L}(I^{(N)}(\Theta:\Theta'))}{N}\leq\frac{\mathcal{L}(A_{U(1)}\left(\ketbra{\psi}^{\otimes N}\right)}{N}.
\label{21}
\end{equation}
In the following theorem we show that  in the limit $N\rightarrow\infty$, the inequality in Eq.~\eqref{21} is saturated, meaning that, for $G=U(1)$, the alignment rate as defined in Definition~\ref{rate} is equal to the regularized, linearized $G$-asymmetry.
\newline
\begin{theorem} \label{111} For $G=U(1)$
\[R_{G}(\psi)=A_{G}^{(reg)}(\psi)=4\pi V(\psi),
\]
where, $V(\psi)$ is the number variance of the state $\ket{\psi}$.
\end{theorem}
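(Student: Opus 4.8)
The upper bound $R_{U(1)}(\psi)\le A_{U(1)}^{(reg)}(\psi)$ is already established: it is the Holevo inequality $I^{(N)}\le A_{U(1)}(\ketbra{\psi}^{\otimes N})=H(\{c_n\})$ of Eqs.~\eqref{13}--\eqref{14}, together with the Gaussian asymptotics of Eqs.~\eqref{16}--\eqref{18} and the monotonicity of $\mathcal{L}$, i.e.\ Eq.~\eqref{21}. The whole content of the theorem is therefore the matching lower bound: to exhibit a sequence of Bob measurements whose accessible information, after the linearization $\mathcal{L}$ and division by $N$, converges to $4\pi V(\psi)$.

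The plan is to take Bob's POVM to be the covariant (canonical) phase measurement on the $\big(N(d-1)+1\big)$-dimensional support of $\ketbra{\psi}^{\otimes N}$, namely $E_{\theta'}\,\mathrm{d}\theta'=\tfrac{1}{2\pi}\ket{e(\theta')}\!\bra{e(\theta')}\,\mathrm{d}\theta'$ with $\ket{e(\theta')}=\sum_{n=0}^{N(d-1)}e^{\imath n\theta'}\ket{n}$ (or a discretization of it into $N(d-1)+1$ equally spaced outcomes). Acting on the state of Eq.~\eqref{12}, this yields the conditional density $p(\theta'|\theta)=\tfrac{1}{2\pi}\big|\sum_n\sqrt{c_n}\,e^{\imath n(\theta-\theta')}\big|^2$, which depends only on the estimation error $\beta=\theta-\theta'$; call this density $g_N(\beta)$. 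Because the ensemble $\{\mathrm{d}\theta,\ket{\psi(\theta)}^{\otimes N}\}$ is $U(1)$-covariant and the prior is uniform, the outcome $\Theta'$ is itself uniform, so $I^{(N)}(\Theta:\Theta')=\log(2\pi)-h(g_N)$ with $h(g_N)=-\int g_N\log g_N$ the differential entropy of $g_N$. This reduces the problem to an asymptotic evaluation of $h(g_N)$. (One should also confirm that the mutual information of the two continuous variables is well defined and equals $\log(2\pi)-h(g_N)$; discretizing $\Theta'$ into $N(d-1)+1$ equal cells and passing to the limit does this cleanly.)

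I would then feed the local central limit theorem for the multinomial coefficients, Eq.~\eqref{16}, into $g_N$: writing $\sqrt{c_n}\approx(2\pi\sigma_N^2)^{-1/4}\exp\!\big(-(n-\mu_N)^2/4\sigma_N^2\big)$ with $\sigma_N^2=NV(\psi)$, the defining sum turns into a Gaussian Fourier integral, so $g_N$ converges to a Gaussian density whose variance is of order $1/\big(NV(\psi)\big)$. Evaluating its differential entropy and substituting gives $I^{(N)}(\Theta:\Theta')=\tfrac12\log\!\big(4\pi N V(\psi)\big)+o(1)$, i.e.\ the Holevo bound of Eq.~\eqref{18} up to a vanishing additive error. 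Applying $\mathcal{L}(x)=2^{2x}$ and dividing by $N$ gives $\mathcal{L}(I^{(N)})/N\to 4\pi V(\psi)$; sandwiched with Eqs.~\eqref{20}--\eqref{21} this yields $R_{U(1)}(\psi)=A_{U(1)}^{(reg)}(\psi)=4\pi V(\psi)$.

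The hard part is carrying out that last step honestly. Since $\mathcal{L}(x)=2^{2x}$ converts an additive $o(1)$ slack in $I^{(N)}$ (in bits) into a multiplicative $(1+o(1))$ factor, a leading-order Gaussian approximation of $g_N$ does not suffice — one must show that the $o(1)$ remainder in $h(g_N)$ genuinely vanishes. This requires (i) a quantitative, Edgeworth-type local limit theorem controlling the non-Gaussian corrections to $\{c_n\}$ uniformly over the $O(\sqrt N)$-wide window supporting the distribution (this is where the gapless-spectrum hypothesis $p_n\neq 0$ for $0<n<d-1$ is used, and it must be carried through); (ii) a tail bound on $g_N$ ensuring that $\int g_N\log g_N$ converges to its Gaussian value, not merely that $g_N$ is close to a Gaussian in total variation; and (iii) enough uniformity that, after the Fourier transform, pointwise convergence of $\sqrt{c_n}$ to the Gaussian amplitude upgrades to convergence of $\big|\sum_n\sqrt{c_n}e^{\imath n\beta}\big|^2$ strong enough to control the entropy. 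With these estimates in place the claimed identity is just the sandwich between this achievable rate and the Holevo upper bound.
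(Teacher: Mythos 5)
Your proposal is correct and follows essentially the same route as the paper: the covariant phase measurement $\ket{e(\theta')}$, the observation that $\Theta'$ is uniform so the mutual information reduces to the entropy of the error distribution $p(\theta'|\theta)$, the Gaussian local-limit approximation of $\{c_n\}$ turning that distribution into a Gaussian of variance $\sim 1/(4NV(\psi))$, and the final linearization $\mathcal{L}(x)=2^{2x}$. The only difference is that you explicitly flag the need to control the additive remainder in $I^{(N)}(\Theta:\Theta')$ (since $\mathcal{L}$ converts it into a multiplicative factor), a point the paper handles by asserting an $O(1/\sqrt{N})$ error term without the quantitative local-limit and tail estimates you call for.
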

\begin{proof}
Recall that the accessible information is the maximum mutual information, $H(\Theta:\Theta')$, over all possible POVMs.  Let Bob's POVM elements be given by  $E_{\theta'}=\ketbra{e_{\theta'}}$, where
\begin{equation}
\ket{e(\theta')}=\frac{1}{\sqrt{2\pi}}\sum_{n=0}^{N(d-1)}\, e^{\imath n\theta'}\ket{n}.
\label{22}
\end{equation}
Note that $\int\,\mathrm{d}\theta'\, E_{\theta'}=I^{(N)}$, where $I^{(N)}=\sum_{n=0}^{N(d-1)}\ketbra{n}$.  We remark that the measurement of Eq.~\eqref{22} has been shown to optimize the average maximum likelihood~\cite{Chiribella1} and the average fidelity~\cite{review}.  We will show that in the limit $N\to\infty$ this measurement also maximizes $H^{(N)}(\Theta:\Theta')$ given by 
\begin{equation}
H^{(N)}(\Theta:\Theta')=\int_{0}^{2\pi}\mathrm{d}\theta\int_{0}^{2\pi}\mathrm{d}\theta' p(\theta,\theta')\,\log\left(\frac{p(\theta,\theta')}{p(\theta')\,p(\theta)}\right),
\label{23}
\end{equation}
where the joint probability distribution, $p(\theta,\theta')$ can be calculated using Bayes' rule, $p(\theta,\theta')=p(\theta'|\theta)\,p(\theta)$.  In our case
\begin{eqnarray}\nonumber
p(\theta)&=&\frac{1}{2\pi}\\
\label{24}
p(\theta'|\theta)&=&\left|\bracket{e(\theta')}{\psi(\theta)}^{\otimes N}\right|^2
\end{eqnarray}
Substituting Eqs.~(\ref{12},~\ref{22}) into Eq.~\eqref{24} gives
\begin{align}\nonumber
p(\theta'|\theta)&=\frac{1}{2\pi}\sum_{n,m=0}^N\sqrt{c_nc_m}\,e^{
\left(\imath(m-n)(\theta-\theta')\right)}\\
&=\frac{1}{2\pi}\left\lvert\sum_{m=0}^N\sqrt{c_m}e^{\imath m(\theta-\theta')}\right\rvert^2
\label{25}
\end{align}
From the equation above we see that $p(\theta'|\theta)=p(\theta|\theta')$. Therefore, the probability that $\Theta'=\theta'$ is given by  
$$
p(\theta')=\int_0^{2\pi}\, p(\theta'|\theta)p(\theta)\mathrm{d}\theta=\frac{1}{2\pi}\int_0^{2\pi}\, p(\theta|\theta')\mathrm{d}\theta=\frac{1}{2\pi}\;.
$$ 
Hence, Eq.~\eqref{23} reduces  to 
\begin{equation}
H^{(N)}(\Theta:\Theta')=\frac{1}{2\pi}\int_0^{2\pi}\mathrm{d}\theta\int_0^{2\pi}\mathrm{d}\theta' p(\theta'|\theta)\log(2\pi p(\theta'|\theta)).
\label{26}
\end{equation}

The expression for the conditional probabilities $p(\theta'|\theta)$ can be greatly simplified as we now show.  Note that the sum in Eq.~\eqref{25} runs over positive integers. In the limit of large $N$ the sum can be approximated by an integral over a continuous variable $m$. Furthermore, as $\mu_N$ is large and positive, the probability distribution corresponding to small photon numbers lies at the tail end of the Gaussian. Using the properties of the error function the lower limit of integration can be extended to negative photon numbers, accumulating a negligible ($O(N^{-1})$) total probability.  Making a change of variable, $\phi=\theta-\theta'$, and using Eq.~\eqref{16}, Eq.~\eqref{25} becomes 
\begin{align}\nonumber
p_{\phi}\equiv p(\theta|\theta')&=\frac{1}{\sqrt{2\pi\sigma_N^2}}\left\lvert\int_{-\infty}^{\infty}\frac{1}{\sqrt{2\pi}}\,e^{\left(-\frac{(m-\mu_N)^2}{4\sigma_N^2}\right)}e^{\imath m\phi}\mathrm{d}m\right\rvert^2\\ \nonumber
&+O\left(\frac{1}{\sqrt{N}}\right)\\
&=\sqrt{\frac{2\sigma_N^2}{\pi}}e^{-2\phi^2\sigma_N^2}+O\left(\frac{1}{\sqrt{N}}\right).
\label{27}
\end{align}
Using Eq.~\eqref{27} and noting that 
\begin{equation}
\int_0^{2\pi}\mathrm{d}\theta\int_o^{2\pi}\mathrm{d}\theta'=\int_{-2\pi}^{2\pi}\mathrm{d}\phi\int_{\phi}^{2\pi}\mathrm{d}\theta,
\label{29}
\end{equation}   
Eq.~\eqref{26} reduces to 
\begin{align}\nonumber
H^{(N)}(\Theta:\Theta')&=\sqrt{\frac{2\sigma_N^2}{\pi}}\int_{-2\pi}^{2\pi}\mathrm{d}\phi e^{-2\phi^2\sigma_N^2}\\
&\times\log\left(\sqrt{8\pi\sigma_N^2}\,e^{-2\phi^2\sigma_N^2}\right)+O\left(\frac{1}{\sqrt{N}}\right).
\label{30}
\end{align}
We note that the mutual information does not depend on the mean photon number  as expected, since the latter can be shifted using $U(1)$-invariant operations, and therefore cannot carry any phase information. 

Using the approximations 
\begin{align}\nonumber
& \int_{-2\pi}^{2\pi}\mathrm{d}\phi\,e^{-2\phi^2\sigma_N^2}=\int_{-\infty}^{\infty}\mathrm{d}x\,e^{-2\sigma_N^2 x^2}+O\left(\frac{1}{N}\right)\\\nonumber
& \int_{-2\pi}^{2\pi}\mathrm{d}\phi\, \phi^2\,e^{-2\phi^2\sigma_N^2}=\int_{-\infty}^{\infty}\mathrm{d}x\,x^2\,e^{-2\sigma_N^2 x^2}+O\left(\frac{1}{N}\right),\\
\label{31}
\end{align}
where the integrals on the right-hand side of Eq.~\eqref{31} are equal to $\sqrt{\pi/2\sigma_N^2}$, and $1/2\sqrt{\pi/8\sigma_N^6}$, respectively, one obtains, after simple algebra, 
\begin{equation}
H^{(N)}(\Theta:\Theta')=\frac{1}{2}\log(4\pi\sigma_N^2)+O\left(\frac{1}{\sqrt{N}}\right).
\label{32}
\end{equation}
Finally, linearizing the accessible information and taking the limit gives
\begin{equation}
\lim_{N\rightarrow\infty}\frac{\mathcal{L}(I^{(N)}(\Theta:\Theta')}{N}=\lim_{N\rightarrow\infty}\frac{2^{\log(4\pi N V(\psi))}}{N}=4\pi V(\psi).
\label{33}
\end{equation}
This completes the proof.
\end{proof}

\subsection{\label{sec:3b} Reference frame associated with $\mathbb{Z}_M$}
We now consider the case where Alice and Bob share an ideal quantum channel but lack a shared reference frame associated with the finite cyclic group of $M$ elements, $\mathbb{Z}_M$.  For example, the case $G=\mathbb{Z}_2$ corresponds to the situation where Alice and Bob lack a reference frame for chirality~\cite{GourSpekkens}.   
Unlike the $U(1)$-case, we find that in this case the optimal rate for the alignment of reference frames is not proportional to the variance even in the limit $M\to\infty$. This is not inconsistent with Theorem~\ref{111} of the previous subsection since our main assumption here is that $N\gg M$. Therefore, the results obtained in this subsection are completely independent on the previous subsection.

The unitary representation $T(g)$ (with $g\in\mathbb{Z}_M$) acting on the Hilbert space $\cH$ can be decomposed into one-dimensional irreps $T^{(k)}$ as 
\begin{equation}
T(g)=\bigoplus_{k=0}^{M-1}\alpha_k T^{(k)}(g).
\label{34}
\end{equation}
where $k$ labels the irreps of $\mathbb{Z}_M$, and $\alpha_k$ is the multiplicity of irrep $T^{(k)}$.  Just as in the $U(1)$ case above the lack of a shared reference frame associated with $\mathbb{Z}_M$ imposes restrictions on the type of states Alice can prepare with respect to Bob's reference frame.  In order to describe these restrictions, it is convenient to write the total Hilbert space, $\cH$, as
\begin{equation}
\cH=\bigoplus_{k=0}^{M-1}\cH^{(k)}=\bigoplus_{k=0}^{M-1}\cM^{(k)}\otimes \cN^{(k)},
\label{35}
\end{equation}
where $\cM^{(k)}$ is the carrier space of $T^{(k)}$, and $\cN^{(k)}$ carries the trivial representation of $\mathbb{Z}_M$.
Thus, the lack of a shared reference frame, associated with $\mathbb{Z}_M$, imposes a  super-selection rule in which superpositions of states from different sectors $\cH^{(k)}$ cannot be prepared. Note that unlike the $U(1)$ case above,  there are a finite number of sectors, $\cH^{(k)}$, equal to the order of the group.
      
In addition to preparation of states, a $\mathbb{Z}_M$ super-selection rule also imposes restrictions on the types of operations Alice can perform  (relative to Bob's reference frame). More precisely, Alice is restricted to $\mathbb{Z}_M$-invariant operations.  In the case were Alice and Bob lack a chiral frame of reference (associated with $\mathbb{Z}_2$) it was shown in~\cite{GourSpekkens} that $\mathbb{Z}_2$-invariant operations are of two types:  shifts in the irrep label $k$ (which in the case of chiral frames corresponds to the bit flip operation, $X$),  and changes in the relative amplitudes of different eigenstates of irrep label $k$.  Similarly, in the case of a $\mathbb{Z}_M$ super-selection rule, the $\mathbb{Z}_M$-invariant operations consist of shifts (mod $M$) in the irrep label $k$, and changes in the relative amplitudes of different eigenstates of irrep label $k$. 

Thus, any qudit $\ket{\psi}\in\cH$, can be brought by $\mathbb{Z}_M$-invariant unitary transformations (and shifts), to a standard form
\begin{equation}
\ket{\psi} =\sum_{k=0}^{M-1}\sqrt{p_k}\ket{k},
\label{36}
\end{equation} 
where $\sum_kp_k=1$, and $\ket{k}\equiv\ket{k,\alpha=1}\in\cM^{(k)}\otimes\cN^{(k)}$ is a state in $\cM^{(k)}\otimes\cN^{(k)}$ chosen to be the standard one.  This is because under the $\mathbb{Z}_M$ super-selection rule all states $\ket{k,\alpha}\in\cM^{(k)}\otimes\cN^{(k)}$, for a given irrep label $k$, are equivalent up to $\mathbb{Z}_M$-invariant unitary transformations.  Hence, we can pick any pure state, say $\ket{k,\alpha=1}\in\cM^{(k)}\otimes\cN^{(k)}$, as our standard one. 

In the reference frame alignment protocol we consider here, where the reference frame is associated with $\mathbb{Z}_M$,   Alice sends $N$ copies of a qudit  prepared in the state $\ket{\psi}$ of Eq.~\eqref{36} to Bob.  The state $\ket{\psi}^{\otimes N}$ is a superposition of the tensor product basis $\{\ket{x_1\ldots x_N},\; x_1,\ldots, x_N\in(0,\ldots,d-1)\}$.  Each such state can be written in terms of the irrep label $k$, and its multiplicity $\alpha$.  That is, $\ket{k,\alpha}\equiv\ket{x_1\ldots x_N}$  and $\alpha=1,\ldots,l_k$, where $l_k$ denotes the number of orthonormal states with the same irrep label $k$.  
In the basis $\ket{k,\alpha}$, the state $\ket{\psi}^{\otimes N}$ can be written as
\begin{equation}
\ket{\psi}^{\otimes N}=\sum_{k=0}^{M-1}\sum_{\alpha=1}^{l_k}\sqrt{c_{k,\alpha}}\,\,\ket{k,\alpha},
\label{38}
\end{equation}
where $c_{k,\alpha}$ are of the form $c_{k,\alpha}=\Pi_{j=0}^{M-1}p_j^{r_j}$, and $r_j$ are positive integers (corresponding to the number of times $x_j\in(0,\ldots,d-1)$ appears in $\ket{k,\alpha}$) satisfying $\sum_j r_j=N$ and $\left(\sum_j jr_j\right)_{\mathrm{mod}\, M}=k$.  Moreover, by $\mathbb{Z}_M$-invariant unitary operations we can transform 
\begin{equation}
\frac{\sum_{\alpha}\sqrt{c_{k,\alpha}}\ket{k,\alpha}}{\sqrt{\sum_\alpha c_{k,\alpha}}},
\label{39}
\end{equation} 
to the standard state $\ket{k}\in\cM^{(k)}\otimes\cN^{(k)}$. This transformation brings $\ket{\psi}^{\otimes N}$ to the form
\begin{equation}
\ket{\psi}^{\otimes N}=\sum_{k=0}^{M-1}\sqrt{c_k}\ket{k},
\label{40}
\end{equation}
where
\begin{equation}
c_k=\sum\binom{N}{r_0\ldots r_{M-1}}p_0^{r_0}\ldots p_{M-1}^{r_{M-1}},
\label{41}
\end{equation}
are the multinomial coefficients that arise from the expansion $\left(\sum_k\sqrt{p_k}\ket{k}\right)^{\otimes N}$,
where terms giving rise to the same irrep label $k$ are grouped together.  We note that the sum in Eq.~\eqref{41} is taken over integers $r_j$ for which $\sum_j r_j=N$ and $\left(\sum_j jr_j\right)_{\mathrm{mod}\,M}=k$.    

Note that Eq.~\eqref{41} is similar to Eq.~\eqref{11} in the $U(1)$ case above with the important difference that $\sum_j jr_j$ is modulo $M$. As we are considering finite cyclic groups $M<<N$, and in the limit $N\rightarrow\infty$ the probability distribution $\{c_k\}$ can no longer be approximated with the normal distribution.

The coefficients $c_k$ can also be written as 
\begin{equation}
c_k=\sum_{m_1=0}^{M-1}\ldots\sum_{m_N=0}^{M-1}\delta_{\bar{m},k}p_{m_1}\ldots p_{m_N},
\label{42}
\end{equation}
where $\bar{m}=\sum_i m_i$.
In order to simplify calculations involving the discreet probability coefficients $\{c_k\}$, we use the discreet Fourier transform to re-write Eq.~\eqref{42} as  
\begin{equation}
c_k=\frac{1}{M}\sum_{n=0}^{M-1}e^{-\frac{\imath2\pi kn}{M}}z_n,
\label{43}
\end{equation}
where 
\begin{equation}
z_n\equiv\left(\sum_{m=0}^{M-1}e^{\frac{\imath2\pi nm}{M}}\,p_m\right)^N\equiv(r_ne^{\imath\theta_n})^N,
\label{44}
\end{equation}
with $0<r_n\leq1$ and the phase $\theta_n\in[0,2\pi)$. Since $z_0=1$, Eq.~\eqref{43} can be written as
\begin{equation}
c_k=\frac{1}{M}\left(1+\sum_{n=1}^{M-1}e^{-\frac{\imath2\pi kn}{M}}z_n\right)\equiv\frac{1}{M}(1+\Delta_k),
\label{45}
\end{equation}
where $\Delta_k$ must be real since $c_k$ are real.  Moreover, using the triangle inequality $\lvert\Delta_k\rvert\leq\sum_{n=1}^{M-1}\lvert z_n\rvert$.  As $1\leq n\leq M-1$, and $\sum_{m=0}^{M-1} p_m=1$, where $p_m<1,\,\forall m\in(0,\ldots,M-1)$, there exists $0< s_n<1$ such that $\lvert r_ne^{\imath\theta_n}\rvert<s_n$.
Therefore, $\lvert z_n\rvert<s_n^N$.  Denoting $s_{\mathrm{max}}\equiv\mathrm{max}\{s_n\}$ it follows that $\left\lvert\Delta_k\right\rvert\leq(M-1)s_{\mathrm{max}}^N$, 
and in the limit $N\rightarrow\infty$, $\lvert\Delta_k\rvert$ goes exponentially to zero for all $k$, 
which also implies that as $N\rightarrow\infty$, $c_k\rightarrow 1/M$.  

Indeed the set of states   
\begin{equation}
\left\lbrace T(g)\ket{+}=\frac{1}{M}\sum_{k=0}^{M-1}e^{\frac{\imath2\pi kg}{M}}\ket{k}\;\Big|\; g=0,\ldots, M-1\right\rbrace,
\label{47}
\end{equation}
where $\ket{+}\equiv T(g=0)\ket{+}$, are optimal resources if Alice and Bob lack a shared frame of reference for $\mathbb{Z}_M$.  Bob can perfectly distinguish the states in Eq.~\eqref{47} and learn Alice's reference frame.   For example, if Alice and Bob lack a chiral frame, associated with $\mathbb{Z}_2$, then the states $\ket{\pm}=1/\sqrt{2}(\ket{0}\pm\ket{1})$, encode all the information about Alice's reference frame.  If Bob detects $\ket{+}$ then he knows that his and Alice's chiral frames are aligned, else they are anti-aligned.  

Bob's description of the $N$ qudits sent to him by Alice is given by 
\begin{equation}
\ket{\psi(g)}=(T(g)\ket{\psi})^{\otimes N}=\sum_{k=0}^{M-1}\sqrt{c_k}\,e^{\frac{\imath2\pi kg}{M}}\ket{k}.
\label{48}
\end{equation}  
Due to the lack of a shared reference frame, Bob has complete ignorance about the element $g\in \mathbb{Z}_M$. As the super-selection rule does not forbid us from performing any unitary on the multiplicity spaces, the state $\ket{\psi(g)}^{\otimes N}$ can be embedded in a $M$-dimensional Hilbert space.  Bob's task is, therefore, to extract information about $g\in\mathbb{Z}_M$ from a state in an $M$-dimensional Hilbert space instead of a $d^N$-dimensional Hilbert space. 

Suppose Bob's POVM is given by $\{E_{y},\,y\in\mathbb{Z}_M\}$, where $E_{y}\geq 0$, with $
\sum_{y\in \mathbb{Z}_M} E_{y}=I=\sum_{k=0}^{M-1}\ketbra{k}$.  We would like to quantify how much 
Bob learns about $g\in\mathbb{Z}_M$ from such a measurement.  Denote by $X$ the random variable 
associated with the relative group element, $x\in\mathbb{Z}_M$, between Alice's and Bob's reference frames.  That is $X=x$  with uniform probability distribution $p_x=1/M$. Denote also by $Y$ the random variable associated with Bob's measurement outcome, $y\in\mathbb{Z}_M$.  Using the same reasoning as in Sec.~\ref{sec:3a} the accessible information per copy obeys
\begin{equation}
\frac{I_{\mathbb{Z}_M}^{(N)}(X:Y)}{N}\leq \frac{H\left(\{c_k\}\right)}{N}.
\label{49}
\end{equation}
where $H(\{c_k\})$ is the Shannon entropy of the probability distribution $\{c_k\}$.  Using Eq.~\eqref{45}, the latter reads
\begin{equation}
H\left(\{c_k\}\right)=-\frac{1}{M}\sum_{k=0}^{M-1}(1+\Delta_k)\log\left(\frac{1}{M}(1+\Delta_k)\right).\\
\label{50}
\end{equation} 
As $\Delta_k$ are small, we can use the Taylor expansion for the logarithm. Thus, noting that $\sum_{k=0}^{M-1}\Delta_k=0$, Eq.~\eqref{50} can be written as 
\begin{equation}
H\left(\{c_k\}\right)=\log M-\frac{1}{M\ln2}\sum_{k=0}^{M-1}\sum_{n=2}^\infty (-1)^n\frac{\Delta_k^n}{n(n-1)}.
\label{51}
\end{equation}
Note that $H\left(\{c_k\}\right)$ is equal to $\log M$ with a correction that, for large $N$, goes exponentially to zero
(recall that the $\Delta_k$'s go exponentially to zero). We now find out the dominant part of this correction.

First, note that 
\begin{align} \nonumber
\sum_k \Delta_k^2&=\sum_{k=0}^{M-1}\sum_{n,m=1}^{M-1}e^{\frac{\imath2\pi k(n+m)}{M}}z_nz_m=M\sum_{n=1}^{M-1}|z_n|^2\\ \nonumber
\sum_k \Delta_k^3&=\sum_{k=0}^{M-1}\sum_{n,m,l=1}^{M-1}e^{\frac{\imath2\pi k(n+m+l)}{M}}z_nz_mz_l\\
&=M\sum_{n,m=1}^{M-1}z_nz_mz_{M-(m+n)},
\label{52}
\end{align}
and similar expressions can be found for $\sum_k \Delta_k^n$ for $n>3$.
Writing the complex numbers $z_n$ as in Eq.~\eqref{44}, Eq.~\eqref{52} becomes
\begin{align}\nonumber
\sum_k \Delta_k^2&=M\sum_{n=1}^{M-1} r_n^{2N}\\
\sum_k \Delta_k^3&=M\nonumber\\
\times\sum_{n,m=1}^{M-1}&(r_n r_m r_{M-(m+n)})^N \cos\left( N(\theta_n+\theta_m+\theta_{M-(m+n)})\right).
\label{53}
\end{align}
As the sums in Eq.~\eqref{53} are over terms that are very small, we focus here only on the dominant terms with the maximum value of $r_n$.  We therefore define 
$$
S=\{l\;\big|\; r_l=r_{\max}\}\;\;;\;\;
r_{\max}\equiv\max_{n=1,...,M-1}\left|\sum_{m=0}^{M-1}e^{\frac{\imath2\pi nm}{M}}\,p_m\right|
$$ 
to be the set  of all integers, $l$, for which the magnitude of $z_l$ (see Eq.~\eqref{44}) is maximum. 

While the dominant terms in the first sum of Eq.~\eqref{53} are proportional to $r_{\max}^{2N}$, the second sum is  exponentially smaller than $r_{\max}^{2N}$.
Similarly, for any $n>2$ the sum $\sum_{k}\Delta_k^n$ is exponentially smaller than $r_{\max}^{2N}$.
Therefore, Eq.~\eqref{51} can be written as 
\begin{equation}
H\left(\{c_k\}\right)=\log M-r_{\mathrm{max}}^{2N}\left(\frac{\lvert S\rvert}{2\ln2}+O\left(\left(\frac{r}{r_{\mathrm{max}}}\right)^N\right)\right),
\label{57}
\end{equation}
where $r$ is some positive number smaller than $r_{\max}$, and $|S|$ denotes the size of $S$.  Note that the maximum $H\left(\{c_k\}\right)$ can be is $\log M$, and this maximum is achieved if and only if $\ket{\psi}$ in Eq.~\eqref{36} is one of the optimal resource states in Eq.~(\ref{47}). It follows that the regularized $\mathbb{Z}_M$-asymmetry goes to zero in the limit $N\to\infty$. 

Just as in the $U(1)$ case we need to modify the $\mathbb{Z}_M$-asymmetry so that it scales linearly in $N$.  This is achieved by defining the linearization function, $\mathcal{L}:\mathcal{R}\rightarrow\mathcal{R}$ to be~\footnote{Following the linearization introduced in~\cite{GourSpekkens} for the special case of $G=\mathbb{Z}_2$.} $$\mathcal{L}(x)=-\log(\log M-x).$$  As this is a monotonically increasing function it follows that  
{\begin{equation}
\frac{\mathcal{L}(I_{\mathbb{Z}_M}^{(N)}(X:Y))}{N}\leq\frac{\mathcal{L}\left(A_{\mathbb{Z}_M}\left((\ketbra{\psi})^{\otimes N}\right)\right)}{N}.
\label{58}
\end{equation}}
In the following theorem we show that in the limit $N\rightarrow\infty$, the inequality in Eq.~\eqref{58} is 
saturated. 
\newline
\begin{theorem} \label{thm:2}
Let $\ket{\psi} =\sum_{k=0}^{M-1}\sqrt{p_k}\ket{k}$ as in Eq.~\eqref{36}.  Then, for $G=\mathbb{Z}_M$
\[R_G(\psi)=A_{G}^{(reg)}(\psi)=-2\log r_{\mathrm{max}},
\]
where $$r_\mathrm{max}=\max_{n\in\{1,2,...,M-1\}}\left|\sum_{m=0}^{M-1}e^{\frac{\imath2\pi nm}{M}}\,p_m\right|$$.
\end{theorem}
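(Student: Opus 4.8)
The plan is to establish the two equalities separately. The value $A_G^{(reg)}(\psi)=-2\log r_{\max}$ will follow almost directly from the expansion already obtained in Eq.~\eqref{57}; the identity $R_G(\psi)=A_G^{(reg)}(\psi)$ is the real content and will require producing an explicit near-optimal measurement for Bob and estimating its mutual information.

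First I would dispose of the regularized asymmetry. By the $\mathbb{Z}_M$ analogue of Eq.~\eqref{14}, $A_{\mathbb{Z}_M}\bigl((\ketbra{\psi})^{\otimes N}\bigr)=H(\{c_k\})$, so Eq.~\eqref{57} reads $\log M-A_{\mathbb{Z}_M}\bigl((\ketbra{\psi})^{\otimes N}\bigr)=r_{\max}^{2N}\bigl(\tfrac{|S|}{2\ln 2}+O((r/r_{\max})^{N})\bigr)$ for some $r<r_{\max}$. Feeding this into $\mathcal{L}(x)=-\log(\log M-x)$ gives $\mathcal{L}\bigl(A_{\mathbb{Z}_M}((\ketbra{\psi})^{\otimes N})\bigr)=-2N\log r_{\max}-\log\tfrac{|S|}{2\ln 2}+o(1)$, and dividing by $N$ and sending $N\to\infty$ yields $A_G^{(reg)}(\psi)=-2\log r_{\max}$. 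Because $\mathcal{L}$ is monotonic, the Holevo inequality~\eqref{58} already gives $R_G(\psi)\le A_G^{(reg)}(\psi)$, so it only remains to prove $R_G(\psi)\ge-2\log r_{\max}$, i.e.~to exhibit a measurement achieving this rate.

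For achievability I would copy the structure of the $U(1)$ proof and have Bob use the covariant discrete-Fourier POVM $E_y=\ketbra{e_y}$, $y\in\mathbb{Z}_M$, with $\ket{e_y}=\tfrac{1}{\sqrt{M}}\sum_{k=0}^{M-1}e^{\imath 2\pi ky/M}\ket{k}$, which satisfies $\sum_y E_y=I$. From Eq.~\eqref{48} the conditional probabilities depend only on $(g-y)\bmod M$, namely $p(y|g)=q_{(g-y)\bmod M}$ with $q_j=\tfrac1M\bigl|\sum_k\sqrt{c_k}\,e^{\imath 2\pi kj/M}\bigr|^{2}$; since $\sum_j q_j=1$ the marginal $p(y)=\tfrac1M$ is uniform, and therefore the mutual information of this strategy is $H^{(N)}(X:Y)=\log M-H(\{q_j\})$. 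The heart of the argument is the large-$N$ behaviour of $H(\{q_j\})$. Writing $c_k=\tfrac1M(1+\Delta_k)$, expanding $\sqrt{1+\Delta_k}$ and using the Fourier identities behind Eq.~\eqref{52} (in particular $\sum_k\Delta_k e^{\imath 2\pi kj/M}=M z_j$ for $1\le j\le M-1$), I expect to find $q_j=\tfrac14 r_{\max}^{2N}\bigl(1+o(1)\bigr)$ for $j\in S$, $q_j=o(r_{\max}^{2N})$ for $j\notin S\cup\{0\}$, and $q_0=1-\sum_{j\ne0}q_j$. Then the dominant contribution to $H(\{q_j\})$ comes from the $|S|$ terms $-q_j\log q_j$ with $j\in S$, each asymptotic to $\tfrac14 r_{\max}^{2N}\cdot 2N\log(1/r_{\max})$, whereas the $j=0$ term contributes only $O(r_{\max}^{2N})$ (no factor of $N$) and the remaining terms are exponentially smaller. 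Hence $H(\{q_j\})=\tfrac{|S|N\log(1/r_{\max})}{2}\,r_{\max}^{2N}(1+o(1))$, so $\mathcal{L}\bigl(H^{(N)}(X:Y)\bigr)=-2N\log r_{\max}-\log N+O(1)$, and dividing by $N$ gives the limit $-2\log r_{\max}$. Since $I^{(N)}(X:Y)\ge H^{(N)}(X:Y)$ and $\mathcal{L}$ is monotone, $R_G(\psi)\ge -2\log r_{\max}$, which together with the Holevo upper bound yields $R_G(\psi)=A_G^{(reg)}(\psi)=-2\log r_{\max}$.

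The main obstacle will be the error control in the expansion of $q_j$. I will need to show that the higher-order ($p\ge2$) terms of the $\sqrt{1+\Delta_k}$ series, which contribute Fourier components $\sum_{n_1+\cdots+n_p\equiv j}z_{n_1}\cdots z_{n_p}$, are exponentially suppressed relative to $\tfrac14|z_j|^{2}$ when $j\in S$; the key fact making this work is that $r_n r_m\le r_{\max}^{2}<r_{\max}$ for all $n,m$, because every $r_n<1$ strictly (recall $|z_n|<s_n^{N}$ with $0<s_n<1$ from the discussion after Eq.~\eqref{45}). A secondary point to keep track of is that the leading term of $H(\{q_j\})$ carries an extra factor of $N$ relative to the Holevo gap $\log M-\chi^{(N)}$; applying $\mathcal{L}$ converts this into a harmless $-\log N$ that washes out after division by $N$, so the two linearized rates still agree in the limit.
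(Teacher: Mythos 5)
Your proposal is correct and follows essentially the same route as the paper: the same covariant Fourier POVM $\ket{e_y}=\tfrac{1}{\sqrt M}\sum_k e^{\imath 2\pi ky/M}\ket{k}$, the same expansion of $c_k=\tfrac1M(1+\Delta_k)$ and of $\sqrt{1+\Delta_k}$, identification of the dominant terms with $j\in S$, and the same linearization $\mathcal{L}(x)=-\log(\log M-x)$ absorbing the extra factor of $N$ into a harmless $-\log N$. Your leading coefficient $\tfrac{|S|}{2}N\log(1/r_{\max})\,r_{\max}^{2N}$ agrees with the paper's $D\,N\log(1/r_{\max})\,r_{\max}^{2N}$ once one uses the symmetry $|z_n|=|z_{M-n}|$ to evaluate $D=\sum_{s\in S}(M-s)/M=|S|/2$; the only differences are notational (you group pairs by $(x-y)\bmod M$ and compute $H(\{q_j\})$ directly rather than splitting the double sum into diagonal and $x>y$ parts).
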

\begin{proof}
Let Bob's POVM elements be given by $E_y=\ketbra{e_y}$, where
\begin{equation}
\ket{e(y)}=\frac{1}{\sqrt{M}}\sum_{k=0}^{M-1} e^{\frac{\imath2\pi ky}{M}}\ket{k}.
\label{59}
\end{equation}
Note that $\sum_{y\in\mathbb{Z}_M} E_y=I=\sum_{k=0}^{M-1}\ketbra{k}$. We will show that the measurement in Eq.~\eqref{59} maximizes $H^{(N)}(X:Y)$ given by 
\begin{equation}
H^{(N)}(X:Y)=\sum_{x,y}p(x,y)\log\left(\frac{p(x,y)}{p(x) p(y)}\right),
\label{60}
\end{equation}
where the joint probability distribution $p(x,y)$ can be calculated using Bayes' rule $p(x,y)=p(y|x)p(x)$, and 
\begin{align}\nonumber
p(x)&=\frac{1}{M}\\
p(y|x)&=\left|\bracket{e(y)}{\psi(x)}\right|^2 
\label{61}
\end{align}
Substituting  Eqs.~(\ref{48},\ref{59}) into Eq.~\eqref{61} gives
\begin{align} \nonumber
p(y|x)&=\frac{1}{M}\sum_{k,l=0}^{M-1}\sqrt{c_k\,c_l}e^{\frac{\imath2\pi(k-l)(x-y)}{M}}\\
&=\frac{1}{M}\left\lvert\sum_{k=0}^{M-1}\sqrt{c_k}e^{\frac{\imath2\pi k(x-y)}{M}}\right\rvert^2
\label{62}
\end{align}
From the equation above we see that $p(y|x)=p(x|y)$.  Therefore, the probability that $Y=y$ is given by 
\begin{equation}
p(y)=\sum_{x=0}^{M-1} p(y|x)p(x)=\frac{1}{M}\sum_{x=0}^{M-1}\, p(y|x)=\frac{1}{M}.
\end{equation}
Hence,  Eq.~\eqref{60} reduces to
\begin{equation}
H^{(N)}(X:Y)=\log M+\frac{1}{M}\sum_{x,y=0}^{M-1}p(y|x)\log(p(y|x)),
\label{63}
\end{equation}
and using Eq.~\eqref{45} the conditional probabilities, $p(y|x)$, maybe written as 
\begin{equation}
p(y|x)=\frac{1}{M^2}\sum_{k,l=0}^{M-1}e^{\frac{\imath2\pi(k-l)(x-y)}{M}}\sqrt{(1+\epsilon_{kl})},
\label{64}
\end{equation} 
where $\epsilon_{kl}=\Delta_k+\Delta_l+2\Delta_k\Delta_l$. As $\epsilon_{kl}$ is small and second order in $\Delta$, expanding the square root in Eq.~\eqref{64} to second order in $\epsilon_{kl}$ gives
\begin{align}\nonumber
p(y|x)&=\frac{1}{M^2}\sum_{k,l=0}^{M-1}e^{\frac{\imath2\pi(k-l)(x-y)}{M}}\left(1+\frac{1}{2}(\Delta_k+\Delta_l)\right.\\
&\left.-\frac{1}{8}(\Delta_k^2+\Delta_l^2)+\frac{1}{4}\Delta_k\Delta_l\right)+ O(\epsilon_{kl}^3).
\label{65}
\end{align} 
As $p(y|x)=p(x|y)$, and $\sum_{k=0}^{M-1}\Delta_k=0$,
\begin{equation}
\frac{1}{4M^2}\sum_{k,l=0}^{M-1}e^{\frac{\imath2\pi(x-y)(k-l)}{M}}\Delta_k\Delta_l=\left\{\begin{array}{l l}
                      0 & \, \mbox{if $x=y$}\\
                      \frac{1}{4}|z_{x-y}|^2 & \, \mbox{if x\textgreater y}\\ \end{array} \right.,
\label{66}
\end{equation} 
and Eq.~\eqref{65} can be written, after some algebra, as
\begin{align}\nonumber
p(x|x)&=1-\frac{1}{4M}\sum_{k=0}^{M-1}\Delta_k^2+O(\Delta_k^3)\\
p(y\neq x|x)&=\frac{1}{4}|z_{x-y}|^2.
\label{67}
\end{align}
Using the same arguments as in Eqs.~(\ref{51}-\ref{57}) above, Eq.~\eqref{67} can be written as 
\begin{align}\nonumber
p(x|x)&=1-\frac{r_{\mathrm{max}}^{2N}}{4}\left(\lvert S\rvert+O\left(\left(\frac{r}{r_{\mathrm{max}}}\right)^N\right)\right)\\
p(y\neq x|x)&=\frac{1}{4}|z_{x-y}|^2.
\label{68}
\end{align}

We now break the mutual information, Eq.~\eqref{63}, into two terms
\begin{align}\nonumber
H^{(N)}(X:Y)&=\log M+\frac{1}{M}\sum_{x=0}^{M-1}p(x|x)\log(p(x|x))\\
&+\frac{2}{M}\sum_{x>y}p(y|x)\log(p(y|x))
\label{69}
\end{align}
Using Eq.~\eqref{68}, the approximation $(1-x)\log(1-x)=\frac{1}{\ln2}(-x+O(x^2))$, and noting that the terms in the first sum of Eq.~\eqref{69} are independent of $x$, we obtain
\begin{align}\nonumber
&\frac{1}{M}\sum_{x=0}^{M-1}p(x|x)\log p(x|x)\\
&=-\frac{r_{\mathrm{max}}^{2N}}{4\ln2}\left(\lvert S\rvert
+O\left(\left(\frac{r}{r_{\mathrm{max}}}\right)^N\right)\right).
\label{70}
\end{align}
The second sum in Eq.~\eqref{69} reads
\begin{equation}
\frac{2}{M}\sum_{x>y}p(y|x)\log(p(y|x))=\frac{1}{2M}\sum_{x>y}|z_{x-y}|^2\log\left(\frac{|z_{x-y}|^2}{4}\right).
\label{71}
\end{equation}
Denoting $n=x-y$ and noting that $\sum_{x>y}=\sum_{n=1}^{M-1}\sum_{y=0}^{M-1-n}$, Eq.~\eqref{71} becomes
\begin{align}\nonumber
&\frac{2}{M}\sum_{x>y}p(y|x)\log(p(y|x))=\\
&\frac{1}{2M}
\sum_{n=1}^{M-1}(M-n)|z_n|^2\left(\log|z_n|^2-2\right).
\label{72}
\end{align}
Plugging Eqs.~(\ref{70},~\ref{72}) into Eq.~\eqref{69} gives 
\begin{align}\nonumber
H^{(N)}(X:Y)&=\log M-\frac{r_{\mathrm{max}}^{2N}}{4\ln2}\left(\lvert S\rvert+O\left(\left(\frac{r}{r_{\mathrm{max}}}\right)^N\right)\right) \nonumber\\
&+\frac{1}{2M}
\sum_{n=1}^{M-1}(M-n)|z_n|^2\left(\log|z_n|^2-2\right).
\label{73}
\end{align} 
As only the largest $|z_n|$'s will contribute significantly to the mutual information, Eq.~\eqref{73} reduces to
\begin{align}\nonumber
H^{(N)}(X:Y)&=\log M-r_{\mathrm{max}}^{2N}\left(\frac{|S|}{4\ln2}+\frac{1}{M}\sum_{s\in S}(M-s)\right.\\
&\left.+\frac{N\log(r_{\mathrm{max}})}{M}\sum_{s\in S} (M-s)+O\left(\left(\frac{r}{r_{\mathrm{max}}}\right)^N\right)\right),
\label{74}
\end{align} 
Denoting by $D\equiv\frac{\sum_{s\in S}(M-s)}{M}$, Eq.~\eqref{74} becomes
\begin{align}\nonumber
&H^{(N)}(X:Y)=\log M\\
&-r_{\mathrm{max}}^{2N}\left(\frac{|S|}{4\ln2}+D(1-N\log r_{\mathrm{max}})
+O\left(\left(\frac{r}{r_{\mathrm{max}}}\right)^N\right)\right)
\label{75}
\end{align}

Finally, linearizing both the mutual information and the $\mathbb{Z}_M$-asymmetry yields
\begin{align}\nonumber
\mathcal{L}(A_{\mathbb{Z}_M}\left((\ketbra{\psi})^{\otimes N}\right)&=-\log\left(\frac{|S|}{2\ln2}+O\left(\left(\frac{r}{r_{\mathrm{max}}}\right)^N\right)\right)\\  \nonumber
&-2N\log r_{\mathrm{max}}\\ \nonumber
\mathcal{L}(H^{(N)}(X:Y))&=-\log\left(\frac{|S|}{4\ln2}+D(1-N\log r_{\mathrm{max}})\right.\\ 
&\left.+O\left(\left(\frac{r}{r_{\mathrm{max}}}\right)^N\right)\right)-2N\log r_{\mathrm{max}}.
\label{76}
\end{align}
Dividing both quantities in Eq.~\eqref{76} by $N$ and taking the limit $N\rightarrow\infty$, one notes that the first term in both quantities tends to zero.  Thus
\begin{align}\nonumber
\lim_{N\rightarrow\infty}\frac{\mathcal{L}(A_{\mathbb{Z}_M}\left((\ketbra{\psi})^{\otimes N}\right)}{N}&=\lim_{N\rightarrow\infty}\frac{\mathcal{L}(H^{(N)}(X:Y))}{N}\\
&=-2\log r_{\mathrm{max}}
\label{77}
\end{align}
This completes the proof.
\end{proof}

Thus the alignment rate is equal to the regularized, linearized $\mathbb{Z}_M$-asymmetry.  In the next section we show, somewhat surprisingly, that the additivity of $R_{G}(\psi)$ does not hold for all finite abelian groups. 

\section{\label{additivity} Superadditivity of $R_G(\psi)$}

The $G$-asymmetry given in Eq.~\eqref{1} is an ensemble frameness monotone~\cite{GourSpekkens,Gour}, i.e.~it does not increase on average under $G$-invariant operations. However, the regularized, linearized $G$-asymmetry, $A_{G}^{(reg)}$ (see Eq.(\ref{regu})), is not necessarily 
an ensemble monotone (as was shown for the case where $G=\mathbb{Z}_2$ in~\cite{GourSpekkens}). Nonetheless,
as the linearization function is monotonically increasing, $A_{G}^{(reg)}$ must be a deterministic frameness monotone. That is, $A_{G}^{(reg)}$, does not increase under deterministic $G$-invariant operations (see~\cite{GourSpekkens} for more details). As such $A_{G}^{(reg)}$ quantifies how well a quantum state can serves as a token of the missing reference frame.  

In addition,  from its definition, $A_{G}^{(reg)}$ is weakly additive; i.e.~$A_{G}^{(reg)}(\psi^{\otimes 2})=2A_{G}^{(reg)}(\psi)$.
As $R_{G}(\psi)=A_{G}^{(reg)}(\psi)$ for $G=U(1)$ and $G=\mathbb{Z}_{M}$, it follows that 
$R_{G}(\psi)$ (for $G=U(1)$ or $G=\mathbb{Z}_M$) is also a deterministic frameness monotone that is weakly additive as one would intuitively expect.
The question we address in this section is whether $R_{G}$ is also strongly additive; that is, for any two pure states $\ket{\psi}$ and $\ket{\phi}$ is it true that
\begin{equation}
R_{G}(\psi\otimes\phi)=R_{G}(\psi)+R_{G}(\phi)\;?
\label{superadditive}
\end{equation}

In the case where $G=U(1)$ Eq.~\eqref{superadditive} is true. Indeed, we have seen in the previous section that 
for $G=U(1)$, the alignment rate $R_{U(1)}(\psi)=4\pi V(\psi)$, where $V(\psi)$ is the number variance of the state $\ket{\psi}$. It was shown in~\cite{GourSpekkens} that the number variance is strongly additive as $V(\psi\otimes \phi)=V(\psi)+V(\phi)$ for any two states $\ket{\psi}$, $\ket{\phi}$. Note that $R_{U(1)}(\psi)$ is strongly additive because it is equal to the number variance.  One cannot infer the strong additivity of $R_{U(1)}(\psi)$ from its definition without the explicit calculation of the previous section.

Many open questions in the field of quantum information theory are concerned with the strong additivity of regularized quantities.  For example, it is believed by many researchers that the classical capacity of a quantum channel, $\mathcal{C}(\mathcal{N})=\lim_{n\to\infty}\frac{\chi(\mathcal{N}^{\otimes n})}{n}$ where $\chi(\mathcal{N})$ is the Holevo capacity~\cite{Holevo2, SW}, is strongly additive.     
We now show that for some groups, $R_{G}$ is not strongly additive even under the tensor product of two (distinct) \emph{pure} states. 

Suppose Alice and Bob lack a shared frame of reference associated with the group $\mathbb{Z}_M$.  Consider
two resource states (or bounded-size quantum reference frames~\cite{BRST}) $\ket{\psi}=\sum_{k=0}^{M-1} \sqrt{p_k}\ket{k}$ and $\ket{\phi}=\sum_{k=0}^{M-1} \sqrt{q_k}\ket{k}$ (these resource states serve as a token for the missing reference frame). From Theorem~\ref{thm:2} we have
$$
R_{\mathbb{Z}_M}(\ket{\psi})=-2\log r_{\mathrm{max}}\;\;\text{and}\;\;
R_{\mathbb{Z}_M}(\ket{\phi})=-2\log l_{\mathrm{max}}
$$
where
\begin{align*}
& r_\mathrm{max}=\max_{n\in\{1,...,M-1\}}\left|\sum_{m=0}^{M-1}e^{\frac{\imath2\pi nm}{M}}\,p_m\right|\\
& l_\mathrm{max}=\max_{n\in\{1,...,M-1\}}\left|\sum_{m=0}^{M-1}e^{\frac{\imath2\pi nm}{M}}\,q_m\right|\;.
\end{align*}  
We now calculate $R_{\mathbb{Z}_M}(\ket{\psi}\otimes\ket{\phi})$. Up to $Z_M$-invariant unitaries, $\ket{\psi}\otimes\ket{\phi}$ can be written as 
\begin{align}
\ket{\psi}\otimes\ket{\phi}=\sum_{k_1,k_2=0}^{M-1}\sqrt{p_{k_1}q_{k_2}}\ket{k_1}\otimes\ket{k_2}
=\sum_k\sqrt{c_k}\ket{k},
\label{79}
\end{align}
where $c_k=\sum p_{k_1}q_{k_2}$, and the sum is over all $k_1,\,k_2$ such that $k_1+k_2=k_{\mathrm{mod}M}$.  Computing the Fourier transform of the coefficients $c_k$, one obtains
\begin{align}
\omega_n&=\sum_{m=0}^{M-1}e^{\frac{\imath2\pi mn}{M}}c_m
&=\sum_{m=0}^{M-1}\sum_{k_1+k_2=m}e^{\frac{\imath2\pi n(k_1+k_2)}{M}}p_{k_1}q_{k_2}.
\label{80}
\end{align}
Noting that $\sum_{m=0}^{M-1}\sum_{k_1+k_2=m}=\sum_{k_1=0}^{M-1}\sum_{k_2=0}^{M-1}$, Eq.~\eqref{80} reduces to 
\begin{equation}
\omega_n=\sum_{k_1=0}^{M-1}e^{\frac{\imath2\pi k_1n}{M}}p_{k_1}\sum_{k_2=0}^{M-1}e^{\frac{\imath2\pi k_2n}{M}}q_{k_2}
\equiv r_nl_ne^{\imath(\theta_n+\phi_n)}\;,
\label{81}
\end{equation}
where $r_n$ and $l_n$ are the absolute values of the Fourier transforms of $\{p_k\}$ and $\{q_k\}$, respectively.
We therefore get
\begin{align}
R_{\mathbb{Z}_M}(\psi\otimes \phi)&=\max_{n\in\{1,...,M-1\}}\left(-2\log r_n-2\log l_n\right)\nonumber\\
& \geq  -2\log r_{\max}-2\log l_{\max}\nonumber\\
&= R_{\mathbb{Z}_M}(\psi)+R_{\mathbb{Z}_M}(\phi)\;.
\end{align}  
Hence, $R_{\mathbb{Z}_M}$ is not strongly additive in general. 
For the case where $M=2$,  $R_{\mathbb{Z}_M}$ is strongly additive since there is only a single $n$, namely $n=1$.  For the case where $M=3$, $R_{\mathbb{Z}_M}$ is again strongly additive as there are only two values for $n$,  which turn out to satisfy $\omega_1=\omega_2^*$ and thus $|\omega_1|=|\omega_2|$ .
However, for $M\geq 4$,  $R_{\mathbb{Z}_M}$  is super-additive as the following example shows.  

Consider the case where Alice and Bob lack a shared frame of reference for $\mathbb{Z}_4$ and Alice has the states 
\begin{align}\nonumber
\ket{\psi}&=\sqrt{\frac{13}{64}}\ket{0}+\sqrt{\frac{18}{64}}\ket{1}+\sqrt{\frac{19}{64}}\ket{2}+\sqrt{\frac{14}{64}}\ket{3}\\
\ket{\phi}&=\sqrt{\frac{7}{20}}\ket{0}+\sqrt{\frac{3}{20}}\ket{1}+\sqrt{\frac{6}{20}}\ket{2}+\sqrt{\frac{4}{20}}\ket{3}.
\label{800}
\end{align}
Computing the Fourier transforms one obtains
$
r_1=r_3=0.113,\;r_2=0
$,
and
$
l_1=l_3=0.07,\;l_2=0.3
$.
Thus, $r_{\mathrm{max}}=r_1=0.113$, and $l_{\mathrm{max}}=l_2=0.3$.  Moreover, $|\omega_1|=0.008$, $|\omega_2|=0$ and $|\omega_3|=0.008$. We therefore have $|\omega_{\mathrm{max}}|=r_1l_1$ which is smaller than $r_{\max}l_{\max}$.  Thus, $R_{\mathbb{Z}_4}(\ket{\psi}\otimes\ket{\phi})$ is strictly greater than $R_{\mathbb{Z}_4}(\ket{\psi})+R_{\mathbb{Z}_4}(\ket{\phi})$.  

In order to understand the meaning of the super-additivity of $R_{G}$, suppose that Bob holds many copies of two resource states $\ket{\psi}$ and $\ket{\phi}$. In particular, consider the state $|\psi\rangle^{\otimes N}\otimes\ket{\phi}^{\otimes N}$, where
$N\gg1$. In order to learn Alice's reference frame, Bob performs measurements on the resource states sent to him by Alice.
The super-additivity of $R_{G}$ indicates that in order for Bob the learn the most about Alice's reference frame,
he should perform a joint measurement on the full system $|\psi\rangle^{\otimes N}\otimes\ket{\phi}^{\otimes N}$, rather than two separate joint measurements, one on $|\psi\rangle^{\otimes N}$ and the other on $\ket{\phi}^{\otimes N}$.  To our knowledge, the alignment rate is the first example of a regularized quantity that is not strongly additive even on distinct pure states.

\section{\label{sec:4}Conclusion}

To summarize, we have derived an information theoretic, operational interpretation for the $G$-asymmetry for the case of a phase reference, and a reference frame associated with a discrete cyclic group of $M$ elements, $\mathbb{Z}_M$.   In particular, we have shown that the alignment rate, $R_G(\psi)$, in a phase alignment protocol associated with $G=U(1)$, and an alignment protocol associated with a finite cyclic group of order $M$, $\mathbb{Z}_M$, is equal to the regularized, linearized $G$-asymmetry, $A^{\mathrm{reg}}_G(\psi)$.  As the $G$-asymmetry is equal to the Holevo bound, our result implies that for reference frames associated with $G=U(1)$ and $G=\mathbb{Z}_M$, the linearized Holevo bound is asymptotically achievable.  We are willing to conjecture that $R_{G}=A_{G}^{(reg)}$ for all finite or compact Lie groups.  

The additivity of $R_G(\psi)$ was discussed in Sec.~\ref{additivity}, where it was shown that $R_G(\psi)$ is both weakly and strongly additive for $G=U(1)$, $G=\mathbb{Z}_2$, and $G=\mathbb{Z}_3$, but only weakly additive for finite cyclic groups, $\mathbb{Z}_M$, for $M\geq4$.  For the latter, we proved that $R_{G}$ is super-additive. In the case of finite groups however, there exists a resource state, denoted here by $|+\rangle$, which \emph{completely} elevates the restrictions that follow from the lack of a shared reference frame. It is therefore the ultimate resource Bob can hold. 
It turns out that up to $\mathbb{Z}_M$-invariant unitary operations, $|\psi\rangle^{\otimes N}$ approaches the ultimate resource state $|+\rangle$ in the limit $N\to\infty$. Hence, it follows from Eq.(\ref{57}) for example, that the super-additivity of $R_{\mathbb{Z}_M}(\psi)$ indicates only an exponentially small gain of reference-frame information in the performance of a joint measurement on the full system, 
$|\psi\rangle^{\otimes N}\otimes\ket{\phi}^{\otimes N}$, rather than two separate joint measurements on $|\psi\rangle^{\otimes N}$ and $\ket{\phi}^{\otimes N}$. It is therefore 
left open if there are non-finite compact Lie groups for which $R_{G}$ is super-additive.
Such examples, if exist, will have a more significant gain (i.e. not exponentially small) of reference-frame information in the performance of a joint measurement on the full system, 
$|\psi\rangle^{\otimes N}\otimes\ket{\phi}^{\otimes N}$, rather than two separate joint measurements on $|\psi\rangle^{\otimes N}$ and $\ket{\phi}^{\otimes N}$.

\section{Acknowledgements}
We are grateful to Fernando Brand\~{a}o, Giulio Chiribella, Iman Marvian, Aram W.~Harrow, and Robert W.~Spekkens for helpful discussions on this work.  We gratefully acknowledge support from NSERC and USARO. 

\bibliography{ratebiblioJune21}
\end{document}